\documentclass[12pt]{article}
\usepackage[utf8]{inputenc}
\usepackage[T1]{fontenc}
\usepackage{array}
\newcolumntype{C}[1]{>{\centering\arraybackslash}m{#1}}
\usepackage{amsmath, amsthm}
\usepackage{amssymb}

\newtheorem{theorem}{Theorem}[section]
\newtheorem{proposition}[theorem]{Proposition}
\newtheorem{corollary}[theorem]{Corollary}
\newtheorem{lemma}[theorem]{Lemma}

\title{Transition Property for $\alpha$-Power Free Languages with $\alpha\geq 2$ and $k\geq 3$ Letters} 

\author{Josef Rukavicka\thanks{Department of Mathematics,
Faculty of Nuclear Sciences and Physical Engineering, CZECH TECHNICAL UNIVERSITY
IN PRAGUE
(josef.rukavicka@seznam.cz).}}

\newtheorem{definition}[theorem]{Definition}
\theoremstyle{remark}

\newtheorem{remark}[theorem]{Remark}

\DeclareMathOperator{\Factor}{F}
\DeclareMathOperator{\pfl}{L}
\DeclareMathOperator{\Prefix}{Prf}
\DeclareMathOperator{\Suffix}{Suf}
\DeclareMathOperator{\occur}{occur}
\DeclareMathOperator{\rext}{rext}
\DeclareMathOperator{\lext}{lext}
\DeclareMathOperator{\RT}{RT}

\date{\small{January 07, 2020}\\
   \small Mathematics Subject Classification: 68R15}

\begin{document}
\maketitle

\begin{abstract}
In 1985, Restivo and Salemi presented a list of five problems concerning power free languages. Problem $4$ states: Given $\alpha$-power-free words $u$ and $v$, decide whether there is a transition from $u$ to $v$. Problem $5$ states: Given $\alpha$-power-free words $u$ and $v$, find a transition word $w$, if it exists.

Let $\Sigma_k$ denote an alphabet with $k$ letters. Let $L_{k,\alpha}$ denote the $\alpha$-power free language over the alphabet $\Sigma_k$, where $\alpha$ is a rational number or a rational ``number with $+$''. If $\alpha$ is a  ``number with $+$'' then suppose $k\geq 3$ and $\alpha\geq 2$.
If $\alpha$ is ``only'' a number then suppose $k=3$ and $\alpha>2$ or $k>3$ and $\alpha\geq 2$. We show that: If $u\in L_{k,\alpha}$ is a right extendable word in $L_{k,\alpha}$ and $v\in L_{k,\alpha}$ is a left extendable word in $L_{k,\alpha}$ then there is a (transition) word $w$ such that $uwv\in L_{k,\alpha}$. We also show a construction of the word $w$.
\end{abstract}

\section{Introduction}
The power free words are one of the major themes in the area of combinatorics on words. An $\alpha$-\emph{power} of a word $r$ is the word $r^{\alpha}=rr\dots rt$ such that $\frac{\vert r^{\alpha}\vert}{\vert r\vert}=\alpha$ and $t$ is a prefix of $r$, where $\alpha\geq 1$ is a rational number . For example $(1234)^{3}=123412341234$ and $(1234)^{\frac{7}{4}}=1234123$.
We say that a finite or infinite word $w$ is $\alpha$-\emph{power free} if $w$ has no factors that are $\beta$-powers for $\beta\geq \alpha$
and we say that a finite or infinite word $w$ is $\alpha^+$-power free if $w$ has no factors that are $\beta$-powers for $\beta>\alpha$, where $\alpha, \beta\geq 1$ are rational  numbers.  In the following, when we write ``$\alpha$-power free'' then $\alpha$ denotes a number or a ``number with $+$''. The power free words, also called repetitions free words, include well known square free ($2$-power free), overlap free ($2^+$-power free), and cube free words ($3$-power free). Two surveys on the topic of power free words can be found in \cite{Rampersad_Narad2007} and \cite{10.1007/978-3-642-22321-1_3}.

One of the questions being researched is the construction of infinite power free words. We define the \emph{repetition threshold} $\RT(k)$ to be the infimum of all rational  numbers $\alpha$ such that
there exists an infinite $\alpha$-power-free word over an alphabet with $k$ letters. Dejean's conjecture states that $\RT(2)=2$, $\RT(3)=\frac{7}{4}$, $\RT(4)=\frac{7}{5}$, and $\RT(k)=\frac{k}{k-1}$ for each $k>4$ \cite{DEJEAN197290}. Dejean's conjecture has been proved by the work of several authors \cite{CARPI2007137,Rampersad2011_dejean,DEJEAN197290,OLLAGNIER1992187,PANSIOT1984297,Rao2011_dejean}.

It is easy to see that $\alpha$-power free words form a factorial language \cite{10.1007/978-3-642-22321-1_3}; it means that all factors of a $\alpha$-power free word are also $\alpha$-power free words. Then Dejean's conjecture implies that there are infinitely many finite $\alpha$-power free words over $\Sigma_k$, where $\alpha>\RT(k)$.

In \cite{10.1007/978-3-642-82456-2_20}, Restivo and Salemi presented a list of five problems  that deal with the question of extendability of power free words. In the current paper we investigate Problem $4$ and Problem $5$:
\begin{itemize} 
\item
Problem $4$:  Given $\alpha$-power-free words $u$ and $v$, decide whether there is a transition word $w$, such that $uwu$ is $\alpha$-power free.
\item
Problem $5$: Given $\alpha$-power-free words $u$ and $v$, find a transition word $w$, if it exists.
\end{itemize}
A recent survey on the progress of solving all the five problems can be found in \cite{10.1007/978-3-030-19955-5_27}; in particular, the problems $4$ and $5$ are solved for some overlap free ($2^+$-power free) binary words. In addition, in \cite{10.1007/978-3-030-19955-5_27} the authors prove that: For every pair $(u,v)$ of cube free words ($3$-power free) over an alphabet with $k$ letters, if $u$ can be infinitely extended to the right and $v$ can be infinitely extended to the left respecting the cube-freeness property, then there exists a “transition” word $w$ over the same alphabet such that $uwv$ is cube free.

In 2009, a conjecture related to Problems $4$ and Problem $5$ of Restivo and Salemi appeared in \cite{10.1007/978-3-642-02737-6_38}: Let $L$ be a power-free language, $u, v \in e(L)$, where $e(L)\subseteq L$ is the set of extendable words of $L$. Then $uwv \in e(L)$ for some word $w$. In 2018, this conjecture was presented also in \cite{SHALLIT201996} in a slightly different form.

Let $\mathbb{N}$ denote the set of natural numbers and let $\mathbb{Q}$ denote the set of rational numbers.
\begin{definition}\label{rrnj221g5} Let 
\[
\begin{split}
\Upsilon=
\{(k,\alpha)\mid k\in \mathbb{N}\mbox{ and }\alpha\in \mathbb{Q}\mbox{ and }k=3 \mbox{ and }\alpha>2\}\\ \cup\{(k,\alpha)\mid k\in \mathbb{N}\mbox{ and }\alpha\in \mathbb{Q}\mbox{ and }k>3\mbox{ and }\alpha\geq 2\}\\ \cup\{(k,\alpha^+)\mid k\in \mathbb{N}\mbox{ and }\alpha\in \mathbb{Q}\mbox{ and } k\geq3\mbox{ and }\alpha\geq 2\}\mbox{.}
\end{split}
\]
\end{definition}
\begin{remark}
The definition of $\Upsilon$ says that:
If $(k,\alpha)\in \Upsilon$ and $\alpha$ is a ``number with $+$'' then $k\geq 3$ and $\alpha\geq 2$.
If $(k,\alpha)\in \Upsilon$ and $\alpha$ is ``just'' a number then $k=3$ and $\alpha>2$ or $k>3$ and $\alpha\geq 2$.
\end{remark}

Let $\pfl$ be a language. A finite word $w\in \pfl$ is called \emph{left extendable}  (resp., \emph{right extendable}) in $\pfl$ if for every $n\in \mathbb{N}$ there is a word $u\in \pfl$ with $\vert u\vert=n$ such that $uw\in \pfl$ (resp., $wu\in \pfl$).

In the current article we improve the results addressing Problems $4$ and Problem $5$ of Restivo and Salemi from \cite{10.1007/978-3-030-19955-5_27} as follows. Let $\Sigma_k$ denote an alphabet with $k$ letters. Let $\pfl_{k,\alpha}$ denote the $\alpha$-power free language over the alphabet $\Sigma_k$. We show that if $(k,\alpha)\in \Upsilon$, $u\in \pfl_{k,\alpha}$ is a right extendable word in $\pfl_{k,\alpha}$, and $v\in \pfl_{k,\alpha}$ is a left extendable word in $\pfl_{k,\alpha}$ then there is a word $w$ such that $uwv\in \pfl_{k,\alpha}$. We also show a construction of the word $w$.

We sketch briefly our construction of a ``transition'' word. Let $u$ be a right extendable $\alpha$-power free word and let $v$ be a left extendable $\alpha$-power free word over $\Sigma_k$ with $k>2$ letters. Let $\bar u$ be a right infinite $\alpha$-power free word having $u$ as a prefix and let $\bar v$ be a left infinite $\alpha$-power free word having $v$ as a suffix. Let $x$ be a letter that is recurrent in both $\bar u$ and $\bar v$. We show that we may suppose that $\bar u$ and $\bar v$ have a common recurrent letter. Let $t$ be a right infinite $\alpha$-power free word over $\Sigma_{k}\setminus\{x\}$. Let $\bar t$ be a left infinite $\alpha$-power free word such that the set of factors of $\bar t$ is a subset of the set of recurrent factors of $t$. We show that such $\bar t$ exists. We identify a prefix $\tilde uxg$ of $\bar u$ such that $g$ is a prefix of $t$ and $\tilde uxt$ is a right infinite $\alpha$-power free word. Analogously we identify a suffix $\bar gx\tilde v$ of $\bar v$ such that $\bar g$ is a suffix of $\bar t$ and $\bar tx\tilde v$ is a left infinite $\alpha$-power free word. Moreover our construction guarantees that $u$ is a prefix of $\tilde uxt$ and $v$ is a suffix of $\bar tx\tilde v$. 
Then we find a prefix $hp$ of $t$ such that $px\tilde v$ is a suffix of $\bar tx\tilde v$ and such that both $h$ and $p$ are ``sufficiently long''. Then we show that $\tilde uxhpx\tilde v$ is an $\alpha$-power free word having $u$ as a prefix and $v$ as a suffix.

The very basic idea of our proof is that if $u,v$ are $\alpha$-power free words and $x$ is a letter such that $x$ is not a factor of both $u$ and $v$, then clearly $uxv$ is $\alpha$-power free  on condition that $\alpha\geq 2$. Just note that there cannot be a factor in $uxv$ which is an $\alpha$-power and contains $x$, because $x$ has only one occurrence in $uxv$.
Our constructed words $\tilde uxt$, $\bar tx\tilde v$, and $\tilde uxhpx\tilde v$ have ``long'' factors which does not contain a letter $x$. This will allow us to apply a similar approach to show that the constructed words does not contain square factor $rr$ such that $r$ contains the letter $x$.

Another key observation is that if $k\geq 3$ and $\alpha>\RT(k-1)$ then there is an infinite $\alpha$-power free word $\bar w$ over $\Sigma_k\setminus\{x\}$, where $x\in \Sigma_k$. This is an implication of Dejean's conjecture. Less formally said, if $u,v$ are $\alpha$-power free words over an alphabet with $k$ letters, then we construct a ``transition'' word $w$ over an alphabet with $k-1$ letters  such that $uwv$ is $\alpha$-power free. 

Dejean's conjecture imposes also the limit to possible improvement of our construction. The construction cannot be used for $\RT(k)\leq \alpha<\RT(k-1)$, where $k\geq 3$, because every infinite (or ``sufficiently long'') word $w$ over an alphabet with $k-1$ letters contains a factor which is an $\alpha$-power. Also for $k=2$ and $\alpha\geq 1$ our technique fails. On the other hand, based on our research, it seems that our technique, with some adjustments, could be applied also for $\RT(k-1)\leq\alpha\leq 2$ and $k\geq 3$. 

\section{Preliminaries}
Recall that $\Sigma_k$ denotes an alphabet with $k$ letters. Let $\epsilon$ denote the empty word.
Let $\Sigma_k^*$ denote the set of all finite words over $\Sigma_k$ including the empty word $\epsilon$, let $\Sigma_k^{\mathbb{N},R}$ denote the set of all right infinite words over $\Sigma_k$, and let $\Sigma_k^{\mathbb{N},L}$ denote the set of all left infinite words over $\Sigma_k$. Let $\Sigma_k^{\mathbb{N}}=\Sigma_k^{\mathbb{N},L}\cup \Sigma_k^{\mathbb{N},R}$. We call $w\in \Sigma_k^{\mathbb{N}}$ an infinite word.

Let $\occur(w,t)$ denote the number of occurrences of the nonempty factor $t\in \Sigma_k^*\setminus\{\epsilon\}$ in the word $w\in \Sigma_k^*\cup\Sigma_k^{\mathbb{N}}$. If $w\in \Sigma_k^{\mathbb{N}}$ and $\occur(w,t)=\infty$, then we call $t$ a \emph{recurrent} factor in $w$.

Let $\Factor(w)$ denote the set of all finite factors of a finite or infinite word $w\in \Sigma_k^*\cup\Sigma_k^{\mathbb{N}}$. The set $\Factor(w)$ contains the empty word and if $w$ is finite then also $w\in \Factor(w)$. Let $\Factor_r(w)\subseteq \Factor(w)$ denote the set of all recurrent nonempty factors of $w\in \Sigma_k^{\mathbb{N}}$.

Let $\Prefix(w)\subseteq \Factor(w)$ denote the set of all prefixes of $w\in \Sigma_k^*\cup\Sigma_k^{\mathbb{N},R}$ and let $\Suffix(w)\subseteq \Factor(w)$ denote the set of all suffixes of $w\in \Sigma_k^*\cup\Sigma_k^{\mathbb{N},L}$. We define that $\epsilon\in \Prefix(w)\cap\Suffix(w)$ and if $w$ is finite then also $w\in \Prefix(w)\cap\Suffix(w)$. 

We have that $\pfl_{k,\alpha}\subseteq\Sigma_k^*$. Let $\pfl_{k,\alpha}^{\mathbb{N}}\subseteq\Sigma_k^{\mathbb{N}}$ denote the set of all infinite $\alpha$-power free words over $\Sigma_k$. Obviously $\pfl_{k,\alpha}^{\mathbb{N}}=\{w\in \Sigma_k^{\mathbb{N}}\mid\Factor(w)\subseteq \pfl_{k,\alpha}\}$. In addition we define $\pfl_{k,\alpha}^{\mathbb{N},R}=\pfl_{k,\alpha}^{\mathbb{N}}\cap\Sigma_k^{\mathbb{N},R}$ and $\pfl_{k,\alpha}^{\mathbb{N},L}=\pfl_{k,\alpha}^{\mathbb{N}}\cap\Sigma_k^{\mathbb{N},L}$; it means the sets of right infinite and left infinite $\alpha$-power free words.

\section{Power Free Languages}

Let $(k,\alpha)\in \Upsilon$ and let $u,v$ be $\alpha$-power free words. The first lemma says that $uv$ is $\alpha$-power free if there are no word $r$ and no nonempty prefix $\bar v$ of $v$ such that $rr$ is a suffix of $u\bar v$ and $rr$ is longer than $\bar v$. 
\begin{lemma}
\label{tnh900u0}
Suppose $(k,\alpha)\in \Upsilon$, $u\in \pfl_{k,\alpha}$, and $v\in\pfl_{k,\alpha}\cup \pfl_{k,\alpha}^{\mathbb{N},R}$. Let 
\[
\begin{split}\Pi=\{(r,\bar v)\mid r\in \Sigma_k^*\setminus\{\epsilon\}\mbox{ and }\bar v\in \Prefix(v)\setminus\{\epsilon\}\mbox{ and } \\ 
rr\in \Suffix(u\bar v)\mbox{ and }\vert rr\vert>\vert \bar v\vert\}\mbox{.}\end{split}\] If $\Pi=\emptyset$ then $uv\in \pfl_{k,\alpha}\cup \pfl_{k,\alpha}^{\mathbb{N},R}$.
\end{lemma}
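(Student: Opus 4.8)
The plan is to prove the contrapositive: assuming $uv\notin \pfl_{k,\alpha}\cup \pfl_{k,\alpha}^{\mathbb{N},R}$, I will produce a pair in $\Pi$. Since $u$ is finite and $v\in\Sigma_k^*\cup\Sigma_k^{\mathbb{N},R}$, the word $uv$ lies in $\Sigma_k^*\cup\Sigma_k^{\mathbb{N},R}$, so the assumption just means $uv$ is not $\alpha$-power free; hence it has a finite factor $z=p^{\beta}$ with $p$ a nonempty word and $\beta$ a ``forbidden'' exponent, i.e.\ $\beta\geq\alpha$, or $\beta>\alpha$ when $\alpha$ is a ``number with $+$''. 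Every element of $\Upsilon$ satisfies $\alpha\geq 2$, so in all cases $\beta\geq 2$; therefore $|z|\geq 2|p|$ and $z$ has period $|p|$ (being a prefix of the periodic word $ppp\cdots$). In particular, for every $j$ with $0\leq j\leq |z|-2|p|$ the factor of $z$ of length $2|p|$ starting at position $j$ is a square, say $r_jr_j$, with $|r_j|=|p|$.

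First I would fix one occurrence of $z$ inside $uv$ and note that it must cross the border between $u$ and $v$: since $u\in\pfl_{k,\alpha}$ and every finite factor of $v$ lies in $\pfl_{k,\alpha}$, whereas $z\notin\pfl_{k,\alpha}$, the word $z$ is a factor neither of $u$ nor of $v$. Thus the chosen occurrence begins at a position lying in $u$ and ends at a position lying in $v$; writing it as $z=u'\bar v_{0}$ with $u'$ a nonempty suffix of $u$ and $\bar v_{0}$ a nonempty prefix of $v$, we also have $|u'|<|z|$, since otherwise $z$ would be a suffix, hence a factor, of $u$.

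Next I would slide a window of length $2|p|$ across $z$ to catch a square that straddles the $u$/$v$ border. Inside $uv$, the square $r_jr_j$ occupies the $2|p|$ consecutive positions beginning $j$ letters to the right of the start of $z$. Requiring that this occurrence begin inside $u$, end strictly inside $v$, and fit inside $z$ amounts respectively to $j\leq |u'|-1$, $\;j\geq |u'|-2|p|+1$, and $0\leq j\leq |z|-2|p|$; a short computation shows $\max(0,\,|u'|-2|p|+1)\leq \min(|u'|-1,\,|z|-2|p|)$ using only $|p|\geq 1$, $|u'|\geq 1$, $\beta\geq 2$, and $|u'|<|z|$, so an admissible $j$ exists. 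Fixing such a $j$, putting $r=r_j$, and letting $\bar v$ be the part of this occurrence of $rr$ that lies in $v$, one checks directly that $\bar v\in\Prefix(v)\setminus\{\epsilon\}$ (its length is $j+2|p|-|u'|\geq 1$), that $rr\in\Suffix(u\bar v)$ (indeed $u\bar v$ is the prefix of $uv$ ending where this occurrence of $rr$ ends), and that $|rr|=2|p|>|\bar v|$, the last inequality being exactly $j\leq |u'|-1$. Hence $(r,\bar v)\in\Pi$, contradicting $\Pi=\emptyset$, and the contrapositive is established.

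The real content is concentrated in the last step: choosing the shift $j$ so that the length-$2|p|$ window sits across the boundary while still leaving a nonempty — but short enough — tail $\bar v$ in $v$. I do not expect any genuine difficulty beyond that interval arithmetic, and the argument is uniform in whether $v$ is finite or infinite and in whether $\alpha$ carries a ``$+$''; the point that makes it work is that $\alpha\geq 2$ forces the forbidden repetition $z$ to contain a square of period $|p|$ at \emph{every} admissible position, which is precisely the flexibility needed to slide such a square onto the $u$/$v$ boundary.
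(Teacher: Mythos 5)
Your proof is correct, and it takes a genuinely different (and in one respect more careful) route than the paper's. The paper first passes to a minimal prefix: it picks $tx\in\Prefix(v)$ with $ut\in\pfl_{k,\alpha}$ but $utx\notin\pfl_{k,\alpha}$, so that the forbidden power $r^{\beta}$ is forced to be a \emph{suffix} of $utx$; it then takes the length-$2\vert r\vert$ suffix of $r^{\beta}$ as the square and declares the resulting pair to be in $\Pi$. You instead fix an arbitrary occurrence of the forbidden power $z=p^{\beta}$ crossing the $u$/$v$ border and slide a length-$2\vert p\vert$ window along the periodic word $z$ to place a square that starts in $u$ and ends in $v$. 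The minimality trick buys the paper the suffix condition $rr\in\Suffix(u\bar v)$ for free, but the paper never verifies the remaining requirement $\vert rr\vert>\vert \bar v\vert$, and for the particular square it chooses (the suffix square of $r^{\beta}$) that inequality can fail when $\beta>2$, since that square may then lie entirely inside the prefix $tx$ of $v$. Your window placement, by contrast, is engineered exactly so that the square begins inside $u$, which is equivalent to $\vert rr\vert>\vert\bar v\vert$, and your interval arithmetic shows an admissible shift $j$ always exists; so your argument supplies precisely the step the paper elides. Both proofs are uniform in whether $v$ is finite or infinite and in whether $\alpha$ carries a ``$+$'', and both use only $\alpha\geq 2$.
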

\begin{proof}
Suppose that $uv$ is not $\alpha$-power free. Since $u$ is $\alpha$-power free, then there are $t\in \Sigma_k^*$ and $x\in \Sigma_k$ such that $tx\in \Prefix(v)$, $ut\in \pfl_{k,\alpha}$ and $utx\not \in \pfl_{k,\alpha}$. 
It means that there is $r\in \Suffix(utx)$ such that $r^{\beta}\in \Suffix(utx)$ for some $\beta\geq \alpha$ or $\beta>\alpha$ if $\alpha$ is a ``number with $+$''; recall Definition \ref{rrnj221g5} of $\Upsilon$. Because $\alpha\geq 2$, this implies that $rr\in \Suffix(r^{\beta})$. If follows that $(tx,r)\in \Pi$. We proved that $uv\not\in \pfl_{k,\alpha}\cup \pfl_{k,\alpha}^{\mathbb{N},R}$ implies that $\Pi\not=\emptyset$. The lemma follows.
\end{proof}

The following technical set $\Gamma(k,\alpha)$ of $5$-tuples $(w_1,w_2,x,g,t)$ will simplify our propositions.
\begin{definition}
\label{rhtuj66d85tr89}
Given $(k,\alpha)\in \Upsilon$, we define that $(w_1,w_2,x,g,t)\in \Gamma(k,\alpha)$ if
\begin{enumerate}
\item
\label{rjfurkoprp01}
$w_1,w_2,g\in \Sigma_k^*$, 
\item
\label{rjfurkoprp02}
$x\in \Sigma_k$,
\item
\label{rjfurkoprp03}
$w_1w_2xg\in \pfl_{k,\alpha}$,
\item
\label{rjfurkoprp04}
$t\in \pfl_{k,\alpha}^{\mathbb{N,R}}$,
\item
\label{rjfurkoprp05}
$\occur(t,x)=0$,
\item
\label{rjfurkoprp06}
$g\in \Prefix(t)$,
\item
\label{tngjh5d5c26g5}
$\occur(w_2xgy,xgy)=1$, where $y\in \Sigma_k$ is such that $gy\in \Prefix(t)$, and
\item
\label{yy7utiuyjhki8}
$\occur(w_2,x)\geq\occur(w_1,x)$.
\end{enumerate}
\end{definition} 
\begin{remark}
Less formally said, the $5$-tuple $(w_1,w_2,x,g,t)$ is in $\Gamma(k,\alpha)$ if $w_1w_2xg$ is $\alpha$-power free word over $\Sigma_k$, $t$ is a right infinite $\alpha$-power free word over $\Sigma_k$, $t$ has no occurrence of $x$ (thus $t$ is a word over $\Sigma_k\setminus\{x\}$), $g$ is a prefix of $t$, $xgy$ has only one occurrence in $w_2xgy$, where $y$ is a letter such that $gy$ is a prefix of $t$, and the number of occurrences of $x$ in $w_2$ is bigger than the number of occurrences of $x$ in $w_1$, where $w_1,w_2, g$ are finite words and $x$ is a letter.
\end{remark}

The next proposition shows that if $(w_1,w_2,x,g,t)$ is from the set $\Gamma(k,\alpha)$ then $w_1w_2xt$ is a right infinite $\alpha$-power free word, where $(k,\alpha)$ is from the set $\Upsilon$. 
\begin{proposition}
\label{tnmbvn21215fhg}
If $(k,\alpha)\in \Upsilon$ and $(w_1,w_2,x,g,t)\in \Gamma(k,\alpha)$ then $w_1w_2xt\in \pfl_{k,\alpha}^{\mathbb{N},R}$.
\end{proposition}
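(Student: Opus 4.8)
The plan is to argue by contradiction, in the spirit of Lemma~\ref{tnh900u0}, but starting from a \emph{shortest} offending prefix; this is what forces the new portion of that prefix to be strictly longer than $g$, and that is the extra leverage which lets Definition~\ref{rhtuj66d85tr89}(\ref{tngjh5d5c26g5})--(\ref{yy7utiuyjhki8}) enter. Suppose $w_1w_2xt\notin\pfl_{k,\alpha}^{\mathbb{N},R}$ and let $P$ be the shortest prefix of $w_1w_2xt$ that is not $\alpha$-power free. Every prefix of $w_1w_2xt$ of length at most $|w_1w_2xg|$ is a prefix of $w_1w_2xg$ by Definition~\ref{rhtuj66d85tr89}(\ref{rjfurkoprp06}), and $w_1w_2xg\in\pfl_{k,\alpha}$ by Definition~\ref{rhtuj66d85tr89}(\ref{rjfurkoprp03}); hence $P=w_1w_2x\bar V$ with $\bar V\in\Prefix(t)$ and $|\bar V|>|g|$. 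Writing $\bar V=g\delta$ we have $\delta\neq\epsilon$, and as $\bar V,g\in\Prefix(t)$ the first letter of $\delta$ is the letter $y$ of Definition~\ref{rhtuj66d85tr89}(\ref{tngjh5d5c26g5}). By minimality of $P$ the offending factor is a suffix $S$ of $P$; let $p$ be its period and $\beta$ its exponent, so $\beta\geq\alpha\geq 2$ (and $\beta>\alpha$ in the ``$+$'' case).

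First I would fix the shape of $S$. The last letter of $P$ lies inside $\bar V$ (as $|\bar V|\geq1$), so $S$ reaches into $\bar V$; but $S$ cannot lie wholly inside $\bar V\in\Factor(t)$, since $t$ is $\alpha$-power free by Definition~\ref{rhtuj66d85tr89}(\ref{rjfurkoprp04}); hence $S$ contains the single displayed copy of $x$. If $p\leq|\bar V|$ then periodicity of $S$ would place an $x$ exactly $p$ positions to the right of the displayed one, i.e.\ inside $t$, contradicting Definition~\ref{rhtuj66d85tr89}(\ref{rjfurkoprp05}); so $p>|\bar V|$, and the last length-$p$ block of $S$ equals $cx\bar V$ for a suffix $c$ of $w_1w_2$ with $|c|=p-1-|\bar V|\geq0$. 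Since $\lfloor\beta\rfloor\geq2$ and $\lfloor\beta\rfloor p\leq|S|\leq|P|$, we get $(cx\bar V)^{\lfloor\beta\rfloor}\in\Suffix(P)$; a position count shows that of these $\lfloor\beta\rfloor$ blocks only the rightmost fails to lie inside $w_1w_2$, so $(cx\bar V)^{\lfloor\beta\rfloor-1}c\in\Suffix(w_1w_2)$. In particular $cx\bar V=cxg\delta$, and hence $xgy$, is a factor of $w_1w_2$; using the rightmost copy of $cx\bar V$ inside $w_1w_2$, the letter $x$ of this $xgy$ occupies position $|w_1w_2|-p+1$.

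The argument then splits on $|w_2|$. If $|w_2|\geq p$, then since $p>|\bar V|\geq|g|+1$ we have $|xgy|=|g|+2\leq p$, so the copy of $xgy$ above starts at position $|w_1w_2|-p+1>|w_1|$ and ends at position $\leq|w_1w_2|$, i.e.\ it lies entirely in $w_2$; thus $xgy$ occurs in $w_2xgy$ both internally and as the suffix, contradicting Definition~\ref{rhtuj66d85tr89}(\ref{tngjh5d5c26g5}). If $|w_2|\leq p-1$, then $(cx\bar V)^{\lfloor\beta\rfloor-1}c$ is a suffix of $w_1w_2$ of length $\geq p>|w_2|$ whose own suffix of length $p-1=|\bar V|+|c|$ is $\bar Vc$; hence $w_2$ is a suffix of $\bar Vc$, and since $\bar V$ has no occurrence of $x$ (Definition~\ref{rhtuj66d85tr89}(\ref{rjfurkoprp05})), $\occur(w_2,x)\leq\occur(c,x)$. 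On the other hand the part of $(cx\bar V)^{\lfloor\beta\rfloor-1}c$ not covered by $w_2$ lies in $w_1$ and carries at least $(\lfloor\beta\rfloor-1)(\occur(c,x)+1)\geq\occur(c,x)+1$ occurrences of $x$, so $\occur(w_1,x)>\occur(w_2,x)$, contradicting Definition~\ref{rhtuj66d85tr89}(\ref{yy7utiuyjhki8}). As both cases are impossible, $w_1w_2xt\in\pfl_{k,\alpha}^{\mathbb{N},R}$.

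The step I expect to be the real obstacle is the one in the middle paragraph: doing the position arithmetic --- relating $|c|$, $p$, $|\bar V|$, and the offsets of the $\lfloor\beta\rfloor$ blocks inside $P$ --- so that ``$S$ crosses the displayed $x$'', ``$p>|\bar V|$'', and ``$\lfloor\beta\rfloor-1$ whole copies of $cx\bar V$ lie in $w_1w_2$'' are rigorous \emph{uniformly} over $(k,\alpha)\in\Upsilon$, notably in the $\alpha^{+}$ regime, where one must keep $\beta>\alpha\geq2$ and hence $\lfloor\beta\rfloor\geq2$ in force. Once that skeleton is set up, Definitions~\ref{rhtuj66d85tr89}(\ref{tngjh5d5c26g5}) and~(\ref{yy7utiuyjhki8}) dispatch the two cases essentially by counting occurrences of $x$.
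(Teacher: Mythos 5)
Your proof is correct and takes essentially the same route as the paper's: after forcing the offending power to straddle the displayed $x$ with last period-block $cx\bar V$, you kill short periods via $\occur(t,x)=0$, the case where the repeated block sits inside $w_2$ via Property~\ref{tngjh5d5c26g5}, and the case where it reaches into $w_1$ via an $x$-count against Property~\ref{yy7utiuyjhki8} --- precisely the roles these properties play in the paper's four-case analysis. The only cosmetic differences are that you argue from a minimal bad prefix with the period and $\lfloor\beta\rfloor$ blocks directly rather than first extracting a square via Lemma~\ref{tnh900u0}, and that your single counting case merges the paper's last two cases.
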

\begin{proof}
Lemma \ref{tnh900u0} implies that it suffices to show that there are no $u\in \Prefix(t)$ with $\vert u\vert>\vert g\vert$ and no $r\in \Sigma_k^*\setminus \{\epsilon\}$ such that $rr\in \Suffix(w_1w_2xu)$ and $\vert rr\vert>\vert u\vert$. Recall that $w_1w_2xg$ is an $\alpha$-power free word, hence we consider $\vert u\vert >\vert g\vert$. To get a contradiction, suppose that such $r,u$ exist. We distinguish following distinct cases.
\begin{itemize}
\item
If $\vert r\vert\leq \vert u\vert$ then: Since $u\in\Prefix(t)\subseteq \pfl_{k,\alpha}$ it follows that $xu\in \Suffix(r^2)$ and hence $x\in \Factor(r^2)$. It is clear that $\occur(r^2,x)\geq 1$ if and only if $\occur(r,x)\geq1$. Since $x\not \in \Factor(u)$ and thus $x\not \in \Factor(r)$, this is a contradiction.
\item
If $\vert r\vert>\vert u\vert$ and $rr\in \Suffix(w_2xu)$ then: Let $y\in \Sigma_k$ be such that $gy\in \Prefix(t)$. Since $\vert u\vert>\vert g\vert$ we have that $gy\in \Prefix(u)$ and $xgy\in \Prefix(xu)$. Since $\vert r\vert>\vert u\vert$ we have that $xgy\in \Factor(r)$. In consequence $\occur(rr,xgy)\geq 2$. 

But Property \ref{tngjh5d5c26g5} of Definition \ref{rhtuj66d85tr89} states that $\occur(w_2xgy, xgy)=1$. Since $rr\in \Suffix(w_2xu)$, this is a contradiction.
\item
If $\vert r\vert>\vert u\vert$ and $rr\not \in \Suffix(w_2xu)$ and $r\in \Suffix(w_2xu)$ then: 

Let $w_{11}, w_{12}, w_{13}, w_{21}, w_{22}\in \Sigma_k^*$ be such that $w_1=w_{11}w_{12}w_{13}$,  $w_2=w_{21}w_{22}$, $w_{12}w_{13}w_{21}=r$, $w_{12}w_{13}w_{2}xu=rr$, and $w_{13}w_{21}=xu$; see Figure below.
\begin{table}[ht]
\centering
\begin{tabular}{ll|l|l|lll}
\cline{3-4}
                               &          & \multicolumn{2}{c|}{$xu$} &                               &                          &                          \\ \hline
\multicolumn{1}{|l|}{$w_{11}$} & $w_{12}$ & $w_{13}$    & $w_{21}$    & \multicolumn{1}{l|}{$w_{22}$} & \multicolumn{1}{l|}{$x$} & \multicolumn{1}{l|}{$u$} \\ \hline
\multicolumn{1}{|l|}{}         & \multicolumn{3}{c|}{$r$}             & \multicolumn{3}{c|}{$r$}                                                            \\ \hline
\end{tabular}
\end{table}

It follows that $w_{22}xu=r$ and $w_{22}=w_{12}$. 
It is easy to see that $w_{13}w_{21}=xu$. For $\occur(u,x)=0$ we have that $\occur(w_2,x)=\occur(w_{22},x)$ and $\occur(w_{13},x)=1$. For $w_{22}=w_{12}$ it follows that $\occur(w_1,x)>\occur(w_2,x)$. This is a contradiction to Property \ref{yy7utiuyjhki8} of Definition \ref{rhtuj66d85tr89}.
\item
If $\vert r\vert>\vert u\vert$ and $rr\not \in \Suffix(w_2xu)$ and $r\not \in \Suffix(w_2xu)$ then: 

Let $w_{11}, w_{12}, w_{13}\in \Sigma_k^*$ be such that $w_1=w_{11}w_{12}w_{13}$, $w_{12}=r$ and $w_{13}w_2xu=r$; see Figure below.
\begin{table}[ht]
\centering
\begin{tabular}{|c|C{2cm}|c|c|c|c|}
\hline
$w_{11}$ & $w_{12}$                 & $w_{13}$ & $w_{2}$ & $x$ & $u$ \\ \hline
         & \multicolumn{1}{c|}{$r$} & \multicolumn{4}{c|}{$r$}       \\ \hline
\end{tabular}
\end{table}

It follows that \[\occur(w_{12},x)=\occur(w_{13},x)+\occur(w_2,x)+\occur(xu,x)\mbox{.}\] This is a contradiction to Property \ref{yy7utiuyjhki8} of Definition \ref{rhtuj66d85tr89}.
\end{itemize}
We proved that the assumption of existence of $r,u$ leads to a contradiction. Thus we proved that for each prefix $u\in \Prefix(t)$ we have that $w_1w_2xu\in \pfl_{k,\alpha}$. The proposition follows.
\end{proof}
We prove that if $(k,\alpha)\in \Upsilon$ then there is a right infinite $\alpha$-power free word over $\Sigma_{k-1}$. In the introduction we showed that this observation could be deduced from Dejean's conjecture. Here additionally, to be able to address Problem $5$ from the list of Restivo and Salemi, we present in the proof also examples of such words. 
\begin{lemma}
\label{htjnd8738kfhj38}
If $(k,\alpha)\in \Upsilon$ then the set $\pfl_{k-1,\alpha}^{\mathbb{N},R}$ is not empty.
\end{lemma}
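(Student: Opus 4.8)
\emph{Plan.} The argument splits into three cases, one for each family of pairs defining $\Upsilon$ in Definition \ref{rrnj221g5}. Note first that in every case $k-1\geq 2$, so it suffices to exhibit one right infinite word over $\Sigma_{k-1}$ avoiding the forbidden powers. I would rely throughout on the obvious monotonicity of power-freeness: if every exponent forbidden by the ``$\alpha$-power free'' condition is also forbidden by the ``$\beta$-power free'' condition, then every $\beta$-power free word is $\alpha$-power free. In particular a square-free ($2$-power free) word is $\alpha$-power free for all $\alpha\geq 2$; an overlap-free ($2^+$-power free) word is $\alpha$-power free for all rational $\alpha>2$; and an overlap-free word is $\alpha^+$-power free for all $\alpha\geq 2$.

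\emph{Binary cases.} For the pairs that reduce to a two-letter alphabet I would use the Thue--Morse word $\mathbf t\in\Sigma_2^{\mathbb{N},R}$, the fixed point of $0\mapsto 01$, $1\mapsto 10$, which is overlap-free. If $k=3$ and $\alpha>2$ is ``just'' a number, then $\Sigma_{k-1}=\Sigma_2$ and, by monotonicity, $\mathbf t\in\pfl_{2,\alpha}^{\mathbb{N},R}$. If $\alpha=\beta^+$ is a ``number with $+$'' with $k\geq 3$ and $\beta\geq 2$, then $\Sigma_2\subseteq\Sigma_{k-1}$ and $\mathbf t$, being overlap-free, is $\beta^+$-power free, so $\mathbf t\in\pfl_{k-1,\beta^+}^{\mathbb{N},R}$. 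Either way the set in the statement is nonempty.

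\emph{The case $k>3$, $\alpha\geq 2$ a number.} Here $k-1\geq 3$, and I would take an infinite square-free word over a three-letter alphabet, for instance the fixed point $\mathbf s$ of the morphism $a\mapsto abc$, $b\mapsto ac$, $c\mapsto b$ (a classical square-free morphism; alternatively one may use a suitable coding of $\mathbf t$). Since $\{a,b,c\}$ can be regarded as a subset of $\Sigma_{k-1}$ and $\mathbf s$ is square-free, monotonicity gives $\mathbf s\in\pfl_{k-1,\alpha}^{\mathbb{N},R}$.

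\emph{Where the difficulty, if any, lies.} There is no genuine obstacle; the only point requiring care is the boundary value $\alpha=2$, where one truly needs a square-free witness rather than a merely overlap-free one — this is precisely why the pair $(3,2)$ (with $2$ a number) is excluded from $\Upsilon$, as no infinite square-free binary word exists. One could instead argue non-constructively: membership in $\Upsilon$ always forces the exponent to lie strictly above $\RT(k-1)$ (using $\RT(2)=2$ and $\RT(m)\leq 7/4$ for $m\geq 3$, with the usual caveat at $\RT(2)$ that ``$2^+$'' rather than ``$2$'' is what is available over two letters), so Dejean's theorem supplies an infinite $\alpha$-power free word over $\Sigma_{k-1}$. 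Since the paper also needs effective witnesses for Problem $5$, I would keep the explicit words $\mathbf t$ and $\mathbf s$ above, which settle existence and constructivity at once.
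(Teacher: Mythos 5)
Your proposal is correct and takes essentially the same route as the paper: the Thue--Morse word as the witness over two letters and the square-free fixed point of $0\mapsto 012$, $1\mapsto 02$, $2\mapsto 1$ (your $a,b,c$ version is the same morphism) over three letters, combined with monotonicity of power-freeness in the exponent. The only cosmetic difference is that you split the cases by the three families defining $\Upsilon$ while the paper splits by $k=3$ versus $k>3$; the witnesses and the reasoning are identical.
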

\begin{proof}
If $k=3$ then $\vert \Sigma_{k-1}\vert = 2$. It is well known that the Thue Morse word is a right infinite $2^+$-power free word over an alphabet with $2$ letters \cite{SHALLIT201996}.  It follows that the Thue Morse word is $\alpha$-power free for each $\alpha>2$.

If $k>3$ then $\vert \Sigma_{k-1}\vert \geq 3$. It is well known that there are infinite $2$-power free words over an alphabet with $3$ letters \cite{SHALLIT201996}. Suppose $0,1,2\in\Sigma_k$. An example is the fixed point of the morphism $\theta$ defined by $\theta(0)=012$, $\theta(1)=02$, and $\theta(2)=1$ \cite{SHALLIT201996}. If an infinite word $t$ is $2$-power free then obviously $t$ is $\alpha$-power free and $\alpha^+$-power free for each $\alpha\geq 2$.

This completes the proof.
\end{proof}

We define the sets of extendable words.
\begin{definition}
Let $\pfl\subseteq \Sigma_k^*$. We define 
\[\lext(\pfl)=\{w\in \pfl\mid w\mbox{ is left extendable in }\pfl\}\] and 
\[\rext(\pfl)=\{w\in \pfl\mid w\mbox{ is right extendable in }\pfl\}\mbox{.}\]
If $u\in \lext(\pfl)$ then let $\lext(u,\pfl)$ be the set of all left infinite words $\bar u$ such that $\Suffix(\bar u)\subseteq \pfl$ and $u\in \Suffix(\bar u)$. 
Analogously if  $u\in \rext(\pfl)$ then let $\rext(u,\pfl)$ be the set of all right infinite words $\bar u$ such that $\Prefix(\bar u)\subseteq \pfl$ and $u\in \Prefix(\bar u)$. 
\end{definition}
We show the sets $\lext(u,\pfl)$ and $\rext(v,\pfl)$ are nonempty for left extendable and rigth extendable words.
\begin{lemma}
\label{yh567cbng6}
If $\pfl\subseteq \Sigma_k^*$ and $u\in \lext(\pfl)$ (resp., $v\in \rext(\pfl)$) then \\ $\lext(u,\pfl)\not=\emptyset$ (resp., $\rext(v,\pfl)\not=\emptyset$).
\end{lemma}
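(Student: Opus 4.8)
The plan is to prove the two symmetric statements simultaneously; I will spell out the right-extendable case, since the left-extendable case follows by the obvious mirror argument (reading words right-to-left). So let $v\in \rext(\pfl)$; I must exhibit a right infinite word $\bar v$ with $v\in \Prefix(\bar v)$ and $\Prefix(\bar v)\subseteq \pfl$. The natural approach is a König's lemma / compactness argument on the tree of right extensions of $v$ inside $\pfl$.

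First I would define, for each $n\in\mathbb N$, the set $E_n=\{z\in \Sigma_k^*\mid |z|=n,\ vz\in\pfl\}$. By the definition of $\rext(\pfl)$, every $E_n$ is nonempty. Next I would observe the key finiteness fact: each $E_n$ is a finite set, since $E_n\subseteq\Sigma_k^n$ and the alphabet $\Sigma_k$ is finite. Then I would organize $\bigcup_n E_n$ as an infinite, finitely-branching tree rooted at $\epsilon$, where $z'$ is a child of $z$ iff $z\in\Prefix(z')$ and $|z'|=|z|+1$; membership of $z$ in the tree means exactly $vz\in\pfl$. The tree is infinite (every level $E_n$ is nonempty) and finitely branching (at most $k$ children per node), so by König's lemma it contains an infinite path $z_0=\epsilon, z_1, z_2,\dots$ with $z_i\in\Prefix(z_{i+1})$ and $|z_i|=i$. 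This path defines a right infinite word $\bar u$ via $\Prefix(\bar u)=\{z_i\mid i\in\mathbb N\}\cup\{\text{their prefixes}\}$, and setting $\bar v = v\bar u$ gives a right infinite word with $v\in\Prefix(\bar v)$. Finally I would check $\bar v\in\pfl_{k,\alpha}^{\mathbb N,R}$: since $\pfl=\pfl_{k,\alpha}$ is a factorial language, every finite prefix of $\bar v$ is a prefix of some $vz_n\in\pfl$, hence lies in $\pfl$; as all finite factors of $\bar v$ are factors of such prefixes, $\Factor(\bar v)\subseteq\pfl$, which is exactly $\bar v\in\pfl_{k,\alpha}^{\mathbb N,R}$. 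Thus $\bar v\in\rext(v,\pfl)$ and $\rext(v,\pfl)\neq\emptyset$.

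One subtlety worth stating explicitly is why König's lemma applies in the precise form needed: the tree must be connected down to the root, i.e.\ if $vz\in\pfl$ with $|z|=n$ then every prefix $z'$ of $z$ also satisfies $vz'\in\pfl$ — this again is just factoriality of $\pfl_{k,\alpha}$, so every node of positive depth does have a parent and the levels $E_n$ genuinely form the levels of one tree. I do not expect any real obstacle here; the only thing to be careful about is not conflating "$vz\in\pfl$ for arbitrarily long $z$" (which does not by itself give an infinite path) with "$vz\in\pfl$ for $z$ of every length together with finite branching" (which does, via König). The left-extendable statement is identical after reversing words, using that $\pfl_{k,\alpha}$ is closed under taking factors and in particular symmetric enough that $w\in\pfl$ entails its reversal considerations go through — more simply, one runs the same tree argument on left extensions $zv$ with $|z|=n$.
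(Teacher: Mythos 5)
Your proof is correct and takes essentially the same route as the paper: the paper's own proof is a two-line appeal to K\"onig's Infinity Lemma applied to the infinitely many words of $\pfl$ that extend $v$, and your finitely-branching tree of levels $E_n$ is exactly that argument written out in detail. The only caveat is that your appeal to factoriality of $\pfl_{k,\alpha}$ (needed both for the tree to be prefix-closed and for $\Prefix(\bar v)\subseteq\pfl$ at the end) goes beyond the lemma's literal hypothesis of an arbitrary $\pfl\subseteq\Sigma_k^*$, but the paper's terse proof relies on the same implicit assumption and the lemma is only ever applied to the factorial language $\pfl_{k,\alpha}$.
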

\begin{proof}
Realize that $u\in \lext(\pfl)$ (resp., $v\in \rext(\pfl)$) implies that there are infinitely many finite words in $\pfl$ having $u$ as a suffix (resp., $v$ as a prefix).
Then the lemma follows from König's Infinity Lemma \cite{Koenig1926}, \cite{Rampersad_Narad2007}.
\end{proof}
The next proposition proves that if $(k,\alpha)\in \Upsilon$, $w$ is a right extendable $\alpha$-power free word, $\bar w$ is a right infinite $\alpha$-power free word having the letter $x$ as a recurrent factor and having $w$ as a prefix, and $t$ is a right infinite $\alpha$-power free word over $\Sigma_k\setminus\{x\}$, then there are finite words $w_1,w_2,g$ such that the $5$-tuple $(w_1,w_2,x,g,t)$ is in the set $\Gamma(k,\alpha)$ and $w$ is a prefix of $w_1w_2xg$. 
\begin{proposition}
\label{tjn9epl22l}
If $(k,\alpha)\in \Upsilon$, $w\in \rext(\pfl_{k,\alpha})$, $\bar w\in \rext(w,\pfl_{k,\alpha})$, $x\in \Factor_r(\bar w)\cap\Sigma_k$, $t\in \pfl_{k,\alpha}^{\mathbb{N},R}$, and $\occur(t,x)=0$ then there are finite words $w_1,w_2,g$ such that $(w_1,w_2,x,g,t)\in \Gamma(k,\alpha)$ and $w\in \Prefix(w_1w_2xg)$.
\end{proposition}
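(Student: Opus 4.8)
The plan is to build the prefix $w_1w_2xg$ of $\bar w$ by hand: $g$ will be a prefix of $t$, the position of $xg$ inside $\bar w$ will be chosen so that the resulting prefix is long enough to contain $w$, and the cut between $w_1$ and $w_2$ will be placed so that $x$ occurs in $w_2$ at least as often as in $w_1$. A useful preliminary observation is that Property~\ref{tngjh5d5c26g5} simplifies: since $g$ has no occurrence of $x$ and the letter $y$ (with $gy\in\Prefix(t)$) is different from $x$, no occurrence of $xgy$ in the word $w_2xgy$ can start inside $w_2$ and reach into the appended block $xgy$; hence $\occur(w_2xgy,xgy)=1$ is equivalent to $xgy\notin\Factor(w_2)$. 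Properties~\ref{rjfurkoprp01}--\ref{rjfurkoprp06} will all be immediate once $w_1w_2xg$ is taken to be a prefix of $\bar w$ ending with an occurrence of $xg$ with $g\in\Prefix(t)$ (for~\ref{rjfurkoprp03}, recall that every prefix of $\bar w$ is $\alpha$-power free). So the task reduces to arranging $xgy\notin\Factor(w_2)$ and $\occur(w_2,x)\ge\occur(w_1,x)$ simultaneously, while keeping $\vert w_1w_2xg\vert\ge\vert w\vert$. Writing $t=t_1t_2\cdots$, I would split into two cases according to whether $xt_1t_2\cdots t_n$ is a recurrent factor of $\bar w$ for every $n\ge0$.

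Case 1: $xt_1\cdots t_n\in\Factor_r(\bar w)$ for all $n$. The crux here is that the position of the first occurrence of $xt_1\cdots t_n$ in $\bar w$ is nondecreasing in $n$, because these factors are nested, and that it tends to infinity: if it were bounded it would stabilise at some position $\pi$, which would force $\bar w$ to coincide with the infinite word $xt$ from position $\pi$ onwards, so that $x$ occurs only finitely often in $\bar w$ --- contradicting $x\in\Factor_r(\bar w)$. Choosing $n$ large enough, let $g=t_1\cdots t_n$, $y=t_{n+1}$, and let $p$ be the prefix of $\bar w$ that ends at the first occurrence of $xg$; then $\vert p\vert\ge\vert w\vert$. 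Since this occurrence of $xg$ is the first one, $xg$ occurs in $p$ only as a suffix, so $xgy\notin\Factor(p)$ and hence $xgy\notin\Factor(w_2)$ no matter how the prefix of $p$ preceding $xg$ is written as $w_1w_2$. It remains only to place the cut so that $w_2$ receives at least half of the occurrences of $x$ in that prefix (take $w_1=\epsilon$ if there are none), giving Property~\ref{yy7utiuyjhki8}, and to read off $p=w_1w_2xg$.

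Case 2: $xt_1\cdots t_{n_0}\notin\Factor_r(\bar w)$ for the least such $n_0$. Since $x$ is recurrent we have $n_0\ge1$; put $g=t_1\cdots t_{n_0-1}$ and $y=t_{n_0}$, so $xg$ is recurrent in $\bar w$ while $xgy$ has only finitely many occurrences, none starting beyond some position $M$. Let $w_1$ be the prefix of $\bar w$ of length $M+1$ and set $a=\occur(w_1,x)$. Because $x$ is recurrent, the number of occurrences of $x$ after position $M$ and before a growing bound is unbounded, and because $xg$ is recurrent there are occurrences of $xg$ arbitrarily far to the right; so I can pick an occurrence of $xg$, ending at some position $q$ with $q+1\ge\vert w\vert$, for which the factor $w_2$ of $\bar w$ located strictly after position $M$ and strictly before that occurrence of $xg$ satisfies $\occur(w_2,x)\ge a$. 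Then $w_1w_2xg$ is a prefix of $\bar w$ of length $q+1\ge\vert w\vert$; Property~\ref{yy7utiuyjhki8} holds by the choice of $q$, and Property~\ref{tngjh5d5c26g5} holds since any occurrence of $xgy$ contained in $w_2$ would start after position $M$, which is impossible.

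In both cases $w_1w_2xg\in\Prefix(\bar w)\subseteq\pfl_{k,\alpha}$ and $w\in\Prefix(w_1w_2xg)$, so $(w_1,w_2,x,g,t)\in\Gamma(k,\alpha)$, as required. I expect Case~1 to be the main obstacle: there the recurrence of $xgy$ rules out the ``go far to the right and avoid $xgy$'' argument of Case~2, and one is forced to use the first occurrence of $xg$ together with the unboundedness of these first-occurrence positions --- which is exactly the point at which the recurrence of $x$ in $\bar w$ enters in an essential way.
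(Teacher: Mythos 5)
Your proof is correct and follows essentially the same route as the paper's: both arguments split according to how much of $xt$ occurs (recurrently) in $\bar w$, use the first occurrence of $xg$ with $w_1=\epsilon$ in the unbounded case, and otherwise take $xg$ to be the longest recurrent prefix of $xt$ in $\bar w$, confine the finitely many bad occurrences to $w_1$, and push $w_2$ far enough to the right to collect at least as many occurrences of $x$ as $w_1$ has. Your case split (every prefix of $xt$ recurrent vs.\ not) is organized slightly differently from the paper's (infinitely vs.\ finitely many prefixes of $xt$ occurring as factors), but the constructions in the two branches are the same, and your second case correctly covers the extra configurations it inherits because it only needs to bound the occurrences of the single word $xgy$ rather than of all non-recurrent prefixes of $xt$.
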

\begin{proof}
Let $\omega=\Factor(\bar w)\cap\Prefix(xt)$ be the set of factors of $\bar w$ that are also prefix of the word $xt$. Based on the size of the set $\omega$ we construct the words $w_1,w_2,g$ and we show that $(w_1,w_2,x,g,t)\in \Gamma(k,\alpha)$ and $w_1w_2xg\in \Prefix(\bar w)\subseteq \pfl_{k,\alpha}$. The properties \ref{rjfurkoprp01}, \ref{rjfurkoprp02}, \ref{rjfurkoprp03}, \ref{rjfurkoprp04}, \ref{rjfurkoprp05}, and \ref{rjfurkoprp06} of Definition \ref{rhtuj66d85tr89} are easy to verify. Hence we explicitely prove only properties \ref{tngjh5d5c26g5} and \ref{yy7utiuyjhki8} and that $w\in \Prefix(w_1w_2xg)$.
\begin{itemize}
\item
If $\omega$ is an infinite set. It follows that $\Prefix(xt)=\omega$. Let $g\in \Prefix(t)$ be such that $\vert g\vert=\vert w\vert$ and let $w_2\in \Prefix(\bar w)$ be such that $w_2xg \in \Prefix(\bar w)$ and $\occur(w_2xg, xg)=1$. Let $w_1=\epsilon$. 

Property \ref{tngjh5d5c26g5} of Definition \ref{rhtuj66d85tr89} follows from $\occur(w_2xg, xg)=1$.
Property \ref{yy7utiuyjhki8} of Definition \ref{rhtuj66d85tr89} is obvious, because $w_1$ is the empty word. Since $\vert g\vert=\vert w\vert$ and $w\in \Prefix(\bar w)$ we have that $w\in \Prefix(w_1w_2xg)$.
\item
If $\omega$ is a finite set. Let $\bar \omega= \omega\cap \Factor_r(\bar w)$ be the set of prefixes of $xt$ that are recurrent in $\bar w$. Since $x$ is recurrent in $\bar w$ we have that $x\in \bar \omega$ and thus $\bar \omega$ is not empty. Let $g\in \Prefix(t)$ be such that $xg$ is the longest element in $\bar\omega$.  Let $w_1\in \Prefix(w)$ be the shortest prefix of $\bar w$ such that if $u\in \omega\setminus\bar \omega$ is a non-recurrent prefix of $xt$ in $\bar w$ then $\occur(w_1,u)=\occur(\bar w,u)$. Such $w_1$ obviously exists, because $\omega$ is a finite set and non-recurrent factors have only a finite number of occurrences.
Let $w_2$ be the shortest factor of $\bar w$ such that $w_1w_2xg\in \Prefix(\bar w)$, $\occur(w_1,x)<\occur(w_2,x)$, and $w\in \Prefix(w_1w_2xg)$. Since $xg$ is recurrent in $\bar w$ and $w\in \Prefix(\bar w)$ it is clear such $w_2$ exists.

We show that Property \ref{tngjh5d5c26g5} of Definition \ref{rhtuj66d85tr89} holds. Let $y\in \Sigma_k$ be such that $gy\in \Prefix(t)$. Suppose that $\occur(w_2xg,xgy)>0$. It would imply that $xgy$ is recurrent in $\bar w$, since all occurrences of non-recurrent words from $\omega$ are in $w_1$. But we defined $xg$ to be the longest recurrent word $\omega$. Hence it is contradiction to our assumption that $\occur(w_2xg,xgy)>0$. 

Property \ref{yy7utiuyjhki8} of Definition \ref{rhtuj66d85tr89} and $w\in \Prefix(w_1w_2xg)$ are obvious from the construction of $w_2$.
\end{itemize}
This completes the proof.
\end{proof}

We define the \emph{reversal} $w^R$ of a finite or infinite word $w=\Sigma_k^*\cup\Sigma_k^{\mathbb{N}}$ as follows: If $w\in\Sigma_k^*$ and $w=w_1w_2\dots w_m$, where $w_i\in \Sigma_k$ and $1\leq i\leq m$, then $w^R=w_mw_{m-1}\dots w_2w_1$. If $w\in \Sigma_k^{\mathbb{N},L}$ and $w=\dots w_2w_1$, where $w_i\in \Sigma_k$ and $i\in \mathbb{N}$, then $w^R=w_1w_2\dots\in \Sigma_k^{\mathbb{N},R}$. Analogously if $w\in \Sigma_k^{\mathbb{N},R}$ and $w=w_1w_2\dots$, where $w_i\in \Sigma_k$ and $i\in \mathbb{N}$, then $w^R=\dots w_2w_1\in \Sigma_k^{\mathbb{N},L}$.

Proposition \ref{tnmbvn21215fhg} allows to construct a right infinite $\alpha$-power free word with a given prefix. The next simple corollary shows that in the same way we can construct a left infinite $\alpha$-power free word with a given suffix.
\begin{corollary}
\label{thyhc7c7cvg}
If $(k,\alpha)\in \Upsilon$, $w\in \lext(\pfl_{k,\alpha})$, $\bar w\in \lext(w,\pfl_{k,\alpha})$, $x\in \Factor_r(\bar w)\cap\Sigma_k$, $t\in \pfl_{k,\alpha}^{\mathbb{N},L}$, and $\occur(t,x)=0$ then there are finite words $w_1,w_2,g$ such that $(w_1^R,w_2^R,x,g^R,t^R)\in \Gamma(k,\alpha)$, $w\in \Suffix(gxw_2w_1)$, and $txw_2w_1\in \pfl_{k,\alpha}^{\mathbb{N},L}$.
\end{corollary}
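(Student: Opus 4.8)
The plan is to deduce the statement from Proposition~\ref{tjn9epl22l} and Proposition~\ref{tnmbvn21215fhg} by passing to reversals, exploiting the fact that $\alpha$-power-freeness is a left-right symmetric property. First I would record the elementary reversal identities that make this work. Since $r^{\beta}$ is a factor of a word $z$ if and only if $(r^R)^{\beta}$ is a factor of $z^R$, a finite or infinite word is $\alpha$-power free if and only if its reversal is; hence $w\in\pfl_{k,\alpha}$ iff $w^R\in\pfl_{k,\alpha}$, and $t\in\pfl_{k,\alpha}^{\mathbb{N},L}$ iff $t^R\in\pfl_{k,\alpha}^{\mathbb{N},R}$. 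Moreover, for all relevant $z,s$ one has $u\in\Prefix(z)$ iff $u^R\in\Suffix(z^R)$, $s\in\Factor(z)$ iff $s^R\in\Factor(z^R)$, $s\in\Factor_r(z)$ iff $s^R\in\Factor_r(z^R)$, and $\occur(z,s)=\occur(z^R,s^R)$. Consequently $w\in\lext(\pfl_{k,\alpha})$ implies $w^R\in\rext(\pfl_{k,\alpha})$, $\bar w\in\lext(w,\pfl_{k,\alpha})$ implies $\bar w^R\in\rext(w^R,\pfl_{k,\alpha})$, $x\in\Factor_r(\bar w)\cap\Sigma_k$ implies $x=x^R\in\Factor_r(\bar w^R)\cap\Sigma_k$, and $t\in\pfl_{k,\alpha}^{\mathbb{N},L}$ with $\occur(t,x)=0$ implies $t^R\in\pfl_{k,\alpha}^{\mathbb{N},R}$ with $\occur(t^R,x)=0$.

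Next I would apply Proposition~\ref{tjn9epl22l} to the reversed data $w^R$, $\bar w^R$, $x$, $t^R$. This produces finite words $w_1',w_2',g'$ with $(w_1',w_2',x,g',t^R)\in\Gamma(k,\alpha)$ and $w^R\in\Prefix(w_1'w_2'xg')$. Now set $w_1=(w_1')^R$, $w_2=(w_2')^R$, $g=(g')^R$, so that $w_1^R=w_1'$, $w_2^R=w_2'$, $g^R=g'$. Then $(w_1^R,w_2^R,x,g^R,t^R)=(w_1',w_2',x,g',t^R)\in\Gamma(k,\alpha)$, which is the first assertion of the corollary. Applying the reversal map to $w^R\in\Prefix(w_1'w_2'xg')=\Prefix(w_1^Rw_2^Rxg^R)$, and using $(w_1^Rw_2^Rxg^R)^R=gxw_2w_1$ together with the prefix-suffix duality above, gives $w\in\Suffix(gxw_2w_1)$, the second assertion.

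Finally, Proposition~\ref{tnmbvn21215fhg} applied to the tuple $(w_1^R,w_2^R,x,g^R,t^R)\in\Gamma(k,\alpha)$ yields $w_1^Rw_2^Rxt^R\in\pfl_{k,\alpha}^{\mathbb{N},R}$; taking reversals and using $(w_1^Rw_2^Rxt^R)^R=txw_2w_1$ together with the reversal closure of right-infinite $\alpha$-power-free words gives $txw_2w_1\in\pfl_{k,\alpha}^{\mathbb{N},L}$, the third assertion. I do not expect a genuine obstacle here: the entire content is the observation that the predicate $\Gamma(k,\alpha)$, the extendability predicates $\lext,\rext$, and the operators $\Factor$, $\Factor_r$, $\Prefix$, $\Suffix$, $\occur$ all transform predictably under reversal, so that the two earlier results can simply be quoted with reversed inputs. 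The only point requiring care is the bookkeeping of the direction of each reversal, so that the $R$-superscripts land exactly where the statement of Corollary~\ref{thyhc7c7cvg} puts them; once $w_1,w_2,g$ are defined as the reversals of the words produced by Proposition~\ref{tjn9epl22l}, everything matches verbatim.
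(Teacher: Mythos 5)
Your proposal is correct and follows exactly the paper's route: the paper's own (much terser) proof simply notes that $\alpha$-power-freeness is preserved under reversal and then invokes Propositions~\ref{tjn9epl22l} and~\ref{tnmbvn21215fhg}, which is precisely the argument you spell out in detail. Your careful bookkeeping of how $\Prefix$, $\Suffix$, $\Factor_r$, $\occur$, and the extendability predicates transform under reversal is exactly the content the paper leaves implicit.
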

\begin{proof}
Let $u\in \Sigma_k^*\cup \Sigma_k^{\mathbb{N}}$. Realize that $u\in \pfl_{k,\alpha}\cup\pfl_{k,\alpha}^{\mathbb{N}}$ if and only if $u^R\in \pfl_{k,\alpha}\cup\pfl_{k,\alpha}^{\mathbb{N}}$. Then the corollary follows from Proposition \ref{tnmbvn21215fhg} and Proposition \ref{tjn9epl22l}.
\end{proof}

Given $k\in\mathbb{N}$ and a right infinite word $t\in\Sigma_k^{\mathbb{N},R}$, let $\Phi(t)$ be the set of all left infinite words $\bar t\in \Sigma_k^{\mathbb{N},L}$ such that $\Factor(\bar t)\subseteq \Factor_r(t)$. It means that all factors of $\bar t\in \Phi(t)$ are recurrent factors of $t$. We show that the set $\Phi(t)$ is not empty.
\begin{lemma}
\label{lrktikl009iu8}
If $k\in\mathbb{N}$ and $t\in \Sigma_k^{\mathbb{N},R}$ then $\Phi(t)\not=\emptyset$.
\end{lemma}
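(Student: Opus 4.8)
The plan is to show that the set $\Factor_r(t)$ of recurrent factors of $t$ is a nonempty language that is closed under taking (nonempty) factors and in which every word can be extended to the left, and then to obtain $\bar t$ as an infinite left extension inside this language. For nonemptiness, since $\Sigma_k$ is finite and $t$ is infinite, the pigeonhole principle yields a letter $a\in\Sigma_k$ with $\occur(t,a)=\infty$, so $a\in\Factor_r(t)$ and $\Factor_r(t)\neq\emptyset$.

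Next I would record two properties of $\Factor_r(t)$. First, if $u\in\Factor_r(t)$ and $v\in\Factor(u)\setminus\{\epsilon\}$, then $v\in\Factor_r(t)$: each of the infinitely many occurrences of $u$ in $t$ contains an occurrence of $v$, hence $\occur(t,v)=\infty$. Second, every $u\in\Factor_r(t)$ admits a left extension inside $\Factor_r(t)$, i.e.\ there is $a\in\Sigma_k$ with $au\in\Factor_r(t)$: the word $u$ has infinitely many occurrences in $t$ and at most one of them starts at the first position of $t$, so infinitely many occurrences of $u$ are immediately preceded by a letter; as $\Sigma_k$ is finite, one letter $a$ precedes $u$ in infinitely many of these occurrences, whence $\occur(t,au)=\infty$.

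To build $\bar t$, set $u_0=a$ for a letter $a\in\Factor_r(t)$ from the first step. Given $u_n\in\Factor_r(t)$, use the left-extendability property to choose a letter $a_{n+1}$ with $u_{n+1}:=a_{n+1}u_n\in\Factor_r(t)$. Then $u_n\in\Suffix(u_{n+1})$ for all $n$ and $\vert u_n\vert=1+n\to\infty$, so the words $u_n$ determine a unique left infinite word $\bar t\in\Sigma_k^{\mathbb{N},L}$ having each $u_n$ as a suffix. Any factor of $\bar t$ is a factor of some $u_n$, hence by the first property lies in $\Factor_r(t)$; therefore $\Factor(\bar t)\subseteq\Factor_r(t)$, i.e.\ $\bar t\in\Phi(t)$. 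Alternatively, once the two properties above are in hand, the existence of $\bar t$ follows from König's Infinity Lemma applied to $\Factor_r(t)$ ordered by the suffix relation, exactly as in the proof of Lemma~\ref{yh567cbng6}.

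I do not expect a genuine obstacle in this argument. The only step needing mild care is the left-extendability property: one must note that discarding the at most one occurrence of $u$ that sits at the very beginning of $t$ still leaves infinitely many occurrences, each with a well-defined preceding letter, so that the pigeonhole argument over the finite alphabet $\Sigma_k$ indeed applies.
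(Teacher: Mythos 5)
Your proof is correct and follows essentially the same route as the paper: establish that every recurrent factor of $t$ admits a left extension by a letter that is still recurrent, and then pass to a left infinite word (the paper invokes K\H{o}nig's Infinity Lemma where you give the explicit iterative construction, but you note that alternative yourself). Your write-up is somewhat more detailed, spelling out the pigeonhole steps the paper labels as obvious.
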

\begin{proof}
Since $t$ is an infinite word, the set of recurrent factors of $t$ is not empty. 
Let $g$ be a recurrent nonempty factor of $t$; $g$ may be a letter. Obviously there is $x\in \Sigma_k$ such that $xg$ is also recurrent in $t$. This implies that the set $\{h\mid hg\in\Factor_r(t)\}$ is infinite. The lemma follows from König's Infinity Lemma \cite{Koenig1926}, \cite{Rampersad_Narad2007}.
\end{proof}

The next lemma shows that if $u$ is a right extendable $\alpha$-power free word then for each letter $x$ there is a right infinite $\alpha$-power free word $\bar u$ such that $x$ is recurrent in $\bar u$ and $u$ is a prefix of $\bar u$.
\begin{lemma}
\label{tjh999g0hkgo}
If $(k,\alpha)\in\Upsilon$, $u\in \rext(\pfl_{k,\alpha})$, and $x\in \Sigma_{k}$ then  there is $\bar u\in \rext(u,\pfl_{k,\alpha})$ such that $x\in \Factor_r(\bar u)$.
\end{lemma}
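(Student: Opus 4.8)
I will build a right infinite $\alpha$-power free word extending $u$ and then reroute its tail through the letter $x$ with the help of Propositions \ref{tjn9epl22l} and \ref{tnmbvn21215fhg}. First, by Lemma \ref{yh567cbng6} there is some $\bar w\in\rext(u,\pfl_{k,\alpha})$, and since $\bar w$ is infinite while $\Sigma_k$ is finite, there is a letter $\rho\in\Factor_r(\bar w)\cap\Sigma_k$. If $\rho=x$, then $\bar u:=\bar w$ already works and we are done; so assume from now on that $\rho\neq x$. The core of the argument is the auxiliary claim: \emph{there is a right infinite $\alpha$-power free word $t$ with $\occur(t,\rho)=0$ and $x\in\Factor_r(t)$}, i.e.\ $t$ avoids the letter $\rho$ but uses the letter $x$ infinitely often.

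To prove the claim I would distinguish two cases according to $k$. If $k=3$, then $\Sigma_k\setminus\{\rho\}$ is a two-letter alphabet $\{x,z\}$, and by Lemma \ref{htjnd8738kfhj38} there is a right infinite $\alpha$-power free word $t$ over $\{x,z\}$; here $x$ is automatically recurrent in $t$, since otherwise all but finitely many letters of $t$ equal $z$, so $z^m\in\Factor(t)$ for every $m$, contradicting that $t$ is $\alpha$-power free, as a sufficiently long power of a single letter is always a forbidden factor. If $k>3$, I would choose two further letters $a,b\in\Sigma_k\setminus\{\rho,x\}$ (possible since $k-2\geq 2$) and take $t$ to be a right infinite square free word over the three-letter alphabet $\{x,a,b\}$, which exists by \cite{SHALLIT201996}; since $(k,\alpha)\in\Upsilon$ and $k>3$, every square free word is $\alpha$-power free (whether $\alpha$ is a number with $\alpha\geq 2$ or a ``number with $+$'' whose base is $\geq 2$), so $t$ is $\alpha$-power free, and $x$ is recurrent in $t$ because otherwise a tail of $t$ would be a right infinite square free word over the two-letter alphabet $\{a,b\}$, which cannot exist (every word of length $4$ over two letters contains a square). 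In both cases $\occur(t,\rho)=0$, which proves the claim.

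Granting the claim, I would finish as follows. Apply Proposition \ref{tjn9epl22l} with the word $w$ of that proposition taken to be $u$, with $\bar w$ and $\rho$ as above (recall $\rho\in\Factor_r(\bar w)\cap\Sigma_k$ and $\occur(t,\rho)=0$), and with $t$ the word just constructed; this produces finite words $w_1,w_2,g$ with $(w_1,w_2,\rho,g,t)\in\Gamma(k,\alpha)$ and $u\in\Prefix(w_1w_2\rho g)$. By Proposition \ref{tnmbvn21215fhg} we then get $w_1w_2\rho t\in\pfl_{k,\alpha}^{\mathbb{N},R}$. Since $g\in\Prefix(t)$ by Property \ref{rjfurkoprp06} of Definition \ref{rhtuj66d85tr89}, we obtain $u\in\Prefix(w_1w_2\rho g)\subseteq\Prefix(w_1w_2\rho t)$, so $\bar u:=w_1w_2\rho t\in\rext(u,\pfl_{k,\alpha})$; and since $x$ occurs infinitely often in $t$ it occurs infinitely often in $w_1w_2\rho t$, that is $x\in\Factor_r(\bar u)$, as required.

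The step I expect to be the main obstacle is the auxiliary claim: $t$ must avoid the letter $\rho$, so it has to be taken over a strictly smaller alphabet and $\bar w$ cannot simply be reused, yet $x$ still has to be made recurrent in $t$. This is precisely why the binary case $k=3$ must be separated from $k>3$; for $k=3$ recurrence of $x$ is free because a one-letter tail is never $\alpha$-power free, while for $k>3$ one uses a ternary square free word containing $x$ (all three of whose letters recur, again since a binary tail cannot be square free) together with the fact that square freeness implies $\alpha$-power freeness throughout the range of $\Upsilon$ with $k>3$. Everything after the claim is a routine application of Propositions \ref{tjn9epl22l} and \ref{tnmbvn21215fhg}.
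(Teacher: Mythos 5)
Your proof is correct and follows essentially the same route as the paper: pick a recurrent letter $\rho$ ($y$ in the paper) of some $\bar w\in\rext(u,\pfl_{k,\alpha})$, build $t$ avoiding $\rho$ but with $x$ recurrent, and conclude via Propositions \ref{tjn9epl22l} and \ref{tnmbvn21215fhg}. The only difference is that you spell out in full the auxiliary claim that $t$ can be chosen with $x$ recurrent, which the paper dispatches by pointing to the proof of Lemma \ref{htjnd8738kfhj38}; your case analysis ($k=3$ via Thue--Morse, $k>3$ via a ternary square free word containing $x$) is exactly what that reference unpacks to.
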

\begin{proof}
Let $w\in \rext(u,\pfl_{k,\alpha})$; Lemma \ref{yh567cbng6} implies that $\rext(u,\pfl_{k,\alpha})$ is not empty. If $x\in\Factor_r(w)$ then we are done. Suppose that $x\not\in\Factor_r(w)$. Let $y\in \Factor_r(w)\cap\Sigma_k$. Clearly $x\not=y$. Proposition \ref{tjn9epl22l} implies that there is $(w_1,w_2,y,g,t)\in \Gamma(k,\alpha)$ such that $u\in \Prefix(w_1w_2yg)$. The proof of Lemma \ref{htjnd8738kfhj38} implies that we can choose $t$ in such a way that $x$ is recurrent in $t$. Then $w_1w_2yt\in \rext(u,\pfl_{k,\alpha})$ and $x\in \Factor_r(w_1w_2yt)$.
This completes the proof.
\end{proof}

The next proposition shows that if $u$ is left extendable and $v$ is right extendable then there are finite words $\tilde u, \tilde v$, a letter $x$, a right infinite word $t$, and a left infinite word $\bar t$ such that $\tilde uxt, \bar tx\tilde v$ are infinite $\alpha$-power free words, $t$ has no occurrence of $x$, every factor of $\bar t$ is a recurrent factor in $t$, $u$ is a prefix of $\tilde uxt$, and $v$ is a suffix of $\bar tx\tilde v$.
\begin{proposition}
\label{u8bnzxhg99}
If $(k,\alpha)\in \Upsilon$, $u\in \rext(\pfl_{k,\alpha})$, and $v\in \lext(\pfl_{k,\alpha})$ then there are $\tilde u, \tilde v\in \Sigma_k^*$, $x\in \Sigma_k$, $t\in \Sigma_k^{\mathbb{N},R}$, and $\bar t\in \Sigma_k^{\mathbb{N},L}$ such that $\tilde uxt\in \pfl_{k,\alpha}^{\mathbb{N},R}$, $\bar tx\tilde v\in \pfl_{k,\alpha}^{\mathbb{N},L}$, $\occur(t,x)=0$, $\Factor(\bar t)\subseteq \Factor_r(t)$, $u\in \Prefix(\tilde uxt)$, and $v\in \Suffix(\bar tx\tilde v)$.
\end{proposition}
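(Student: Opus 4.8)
The plan is to assemble the ingredients already prepared. First I would fix an arbitrary letter $x\in\Sigma_k$ and produce infinite $\alpha$-power free extensions of $u$ and $v$ that both have $x$ as a recurrent factor. Since $u\in\rext(\pfl_{k,\alpha})$, Lemma \ref{tjh999g0hkgo} gives $\bar u\in\rext(u,\pfl_{k,\alpha})$ with $x\in\Factor_r(\bar u)$. Since $v\in\lext(\pfl_{k,\alpha})$ we have $v^R\in\rext(\pfl_{k,\alpha})$ (as $\pfl_{k,\alpha}\cup\pfl_{k,\alpha}^{\mathbb N}$ is closed under reversal), so Lemma \ref{tjh999g0hkgo} applied to $v^R$ and $x$ produces an element of $\rext(v^R,\pfl_{k,\alpha})$ in which $x$ is recurrent; reversing it yields $\bar v\in\lext(v,\pfl_{k,\alpha})$ with $x\in\Factor_r(\bar v)$. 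Thus $\bar u$ and $\bar v$ share the recurrent letter $x$.

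Next I would construct $t$. Because $(k,\alpha)\in\Upsilon$, Lemma \ref{htjnd8738kfhj38} provides a right infinite $\alpha$-power free word over an alphabet with $k-1$ letters; since $\vert\Sigma_k\setminus\{x\}\vert=k-1$ and $\alpha$-power-freeness is invariant under renaming letters, this yields $t\in\pfl_{k,\alpha}^{\mathbb N,R}$ with $\occur(t,x)=0$. Applying Proposition \ref{tjn9epl22l} with $w=u$, $\bar w=\bar u$, the letter $x$, and this $t$, I obtain finite words $w_1,w_2,g$ with $(w_1,w_2,x,g,t)\in\Gamma(k,\alpha)$ and $u\in\Prefix(w_1w_2xg)$. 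Setting $\tilde u=w_1w_2$, Proposition \ref{tnmbvn21215fhg} gives $\tilde uxt=w_1w_2xt\in\pfl_{k,\alpha}^{\mathbb N,R}$, and since $g\in\Prefix(t)$ we get $u\in\Prefix(\tilde uxg)\subseteq\Prefix(\tilde uxt)$.

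For $\bar t$ I would use Lemma \ref{lrktikl009iu8}: since $t\in\Sigma_k^{\mathbb N,R}$ the set $\Phi(t)$ is nonempty, so pick $\bar t\in\Phi(t)$. Then $\Factor(\bar t)\subseteq\Factor_r(t)\subseteq\Factor(t)\subseteq\pfl_{k,\alpha}$, so $\bar t\in\pfl_{k,\alpha}^{\mathbb N,L}$, and $\occur(\bar t,x)=0$ because no factor of $t$ contains $x$. Now I would apply Corollary \ref{thyhc7c7cvg} with $w=v$, $\bar w=\bar v$, the letter $x$, and $\bar t$ in the role of the left infinite word required there (it lies in $\pfl_{k,\alpha}^{\mathbb N,L}$ and has no occurrence of $x$). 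This gives finite words $w_1',w_2',g'$ with $v\in\Suffix(g'xw_2'w_1')$ and $\bar txw_2'w_1'\in\pfl_{k,\alpha}^{\mathbb N,L}$; moreover $(w_1'^R,w_2'^R,x,g'^R,\bar t^R)\in\Gamma(k,\alpha)$ forces $g'\in\Suffix(\bar t)$. Setting $\tilde v=w_2'w_1'$ gives $\bar tx\tilde v\in\pfl_{k,\alpha}^{\mathbb N,L}$ and $v\in\Suffix(g'x\tilde v)\subseteq\Suffix(\bar tx\tilde v)$. Together with $\occur(t,x)=0$ and $\Factor(\bar t)\subseteq\Factor_r(t)$, all the required properties hold.

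I do not anticipate a real obstacle, since the statement is essentially a packaging of the earlier results; the only points requiring care are that the single letter $x$ must be chosen before the infinite extensions $\bar u,\bar v$ are invoked (so that Lemma \ref{tjh999g0hkgo} can force $x$ to be recurrent in each), that the left-hand statements are obtained from the right-hand ones via the reversal symmetry of $\pfl_{k,\alpha}$, and that the word labelled $t$ inside Corollary \ref{thyhc7c7cvg} must be taken to be our $\bar t$, not our $t$.
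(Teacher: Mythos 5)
Your proposal is correct and follows essentially the same route as the paper: choose a letter $x$, use Lemma \ref{tjh999g0hkgo} (via reversal for $v$) to get infinite extensions $\bar u,\bar v$ with $x$ recurrent in both, take $t$ from Lemma \ref{htjnd8738kfhj38} over $\Sigma_k\setminus\{x\}$ and $\bar t\in\Phi(t)$ from Lemma \ref{lrktikl009iu8}, then apply Proposition \ref{tjn9epl22l}, Corollary \ref{thyhc7c7cvg}, and Proposition \ref{tnmbvn21215fhg}. You are merely more explicit than the paper about the reversal step for $v$ and about the fact that $\bar t$ (not $t$) plays the role of the left infinite word in Corollary \ref{thyhc7c7cvg}, both of which are correct readings.
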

\begin{proof}
Let $\bar u\in \rext(u,\pfl_{k,\alpha})$ and $\bar v\in \lext(v,\pfl_{k,\alpha})$ be such that $\Factor_r(\bar u)\cap\Factor_r(\bar v)\cap \Sigma_k\not =\emptyset$. Lemma \ref{tjh999g0hkgo} implies that such $\bar u, \bar v$ exist. Let $x\in \Factor_r(\bar u)\cap\Factor_r(\bar v)\cap \Sigma_k$. It means that the letter $x$ is recurrent in both $\bar u$ and $\bar v$.

Let $t$ be a right infinite  $\alpha$-power free word over $\Sigma_k\setminus\{x\}$. Lemma \ref{htjnd8738kfhj38} asserts that such $t$ exists. Let $\bar t\in\Phi(t)$; Lemma \ref{lrktikl009iu8} shows that $\Phi(t)\not =\emptyset$. It is easy to see that $\bar t\in \pfl_{k,\alpha}^{\mathbb{N},L}$, because $\Factor(\bar t)\subseteq \Factor_r(t)$ and $t\in \pfl_{k,\alpha}^{\mathbb{N},R}$.

Proposition \ref{tjn9epl22l} and Corollary \ref{thyhc7c7cvg} imply that there are $u_1,u_2,g,v_1,v_2,\bar g\in \pfl_{k,\alpha}$ such that 
\begin{itemize}
\item $(u_1,u_2,x,g,t)\in\Gamma(k,\alpha)$, 
\item $(v_1^R,v_2^R,x,\bar g^R,\bar t^R)\in \Gamma(k,\alpha)$,  
\item $u\in \Prefix(u_1u_2xg)$, and
\item $v^R\in \Prefix(v_1^Rv_2^Rx\bar g^R)$; it follows that $v\in \Suffix(\bar gxv_2v_1)$. 
\end{itemize}
Proposition \ref{tnmbvn21215fhg} implies that $u_1u_2xt, v_1^Rv_2^Rx\bar t^R\in \pfl_{k,\alpha}^{\mathbb{N},R}$. It follows that $\bar txv_2v_1\in \pfl_{k,\alpha}^{\mathbb{N},L}$. Let $\tilde u=u_1u_2$ and $\tilde v=v_2v_1$.
This completes the proof.
\end{proof}
The main theorem of the article shows that if $u$ is a right extendable $\alpha$-power free word and $v$ is a left extendable $\alpha$-power free word then there is a word $w$ such that $uwv$ is $\alpha$-power free. The proof of the theorem shows also a construction of the word $w$.
\begin{theorem}
If $(k,\alpha)\in \Upsilon$, $u\in \rext(\pfl_{k,\alpha})$, and $v\in \lext(\pfl_{k,\alpha})$ then there is $w\in \pfl_{k,\alpha}$ such that $uwv\in \pfl_{k,\alpha}$.
\end{theorem}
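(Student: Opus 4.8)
The plan is to obtain the transition word from Proposition~\ref{u8bnzxhg99} by gluing the two one-sided infinite words it provides along a long common block, and then to verify $\alpha$-power freeness of the resulting finite word directly, by reasoning about $\beta$-powers rather than through Lemma~\ref{tnh900u0}.

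First I would apply Proposition~\ref{u8bnzxhg99} to get $\tilde u,\tilde v\in\Sigma_k^*$, a letter $x\in\Sigma_k$, $t\in\pfl_{k,\alpha}^{\mathbb{N},R}$ and $\bar t\in\pfl_{k,\alpha}^{\mathbb{N},L}$ with $\tilde uxt\in\pfl_{k,\alpha}^{\mathbb{N},R}$, $\bar tx\tilde v\in\pfl_{k,\alpha}^{\mathbb{N},L}$, $\occur(t,x)=0$, $\Factor(\bar t)\subseteq\Factor_r(t)$, $u\in\Prefix(\tilde uxt)$ and $v\in\Suffix(\bar tx\tilde v)$; note that $\bar t$ then also avoids $x$. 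Next I would choose a suffix $p$ of $\bar t$ with $|p|\geq\max\{|\tilde v|,|v|\}$; since $p\in\Factor(\bar t)\subseteq\Factor_r(t)$ is recurrent in $t$, there is an occurrence of $p$ in $t$ preceded by a prefix $h$ of $t$ with $|h|>|\tilde u|+|\tilde v|+|u|$. Then $hp\in\Prefix(t)$, so $\tilde uxhp\in\Prefix(\tilde uxt)$ and $px\tilde v\in\Suffix(\bar tx\tilde v)$ are both $\alpha$-power free, and $W:=\tilde uxhpx\tilde v$ has $u$ as a prefix and $v$ as a suffix (because $|\tilde uxhp|\geq|u|$ and $|px\tilde v|\geq|v|$, while $u,\tilde uxhp$ are prefixes of $\tilde uxt$ and $v,px\tilde v$ are suffixes of $\bar tx\tilde v$); as moreover $|u|+|v|<|W|$, there is $w\in\Sigma_k^*$ with $uwv=W$. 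It then remains to prove $W\in\pfl_{k,\alpha}$, for then $w\in\Factor(W)\subseteq\pfl_{k,\alpha}$ and $uwv=W\in\pfl_{k,\alpha}$.

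To prove $W\in\pfl_{k,\alpha}$ I would assume the contrary: some factor $z=r^\beta$ of $W$ is a $\beta$-power with $r\neq\epsilon$ and $\beta\geq\alpha$ (or $\beta>\alpha$ in the ``$+$'' case). Writing $W=\tilde uxhpx\tilde v$, let $p$ occupy positions $d,\dots,b$, let the two letters $x$ sit at positions $|\tilde u|+1$ and $C:=b+1$, and put $N=|W|$; the block of positions $|\tilde u|+2,\dots,b$, namely $hp$, carries no $x$, and $C-d=|p|\geq|\tilde v|=N-C$. If $z$ is contained in $\tilde uxhp$ (positions $\leq b$) or in $px\tilde v$ (positions $\geq d$), then $z$ is a $\beta$-power lying inside an $\alpha$-power-free word, which is impossible. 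Hence $z$ straddles: its left end $L$ satisfies $L\leq d-1$ and its right end $R$ satisfies $R\geq C$, so $z$ contains the letter $x$ at position $C$; therefore $x\in\Factor(r)$ and, since $\beta\geq 2$, the residue class of $C$ modulo the period $|r|$ contains at least two occurrences of $x$ inside $z$. I would then check that $C-|r|$ lies in $[L,R]$ (if not, all class members inside $z$ sit at $C,C+|r|,\dots$, forcing $C+|r|\leq R\leq N$, i.e.\ $|r|\leq N-C=|\tilde v|\leq|p|$, so $C-|r|\geq C-|p|=d>d-1\geq L$, a contradiction). By periodicity of $z=r^\beta$, position $C-|r|$ then carries $x$; since $C-|r|\leq b$ and $hp$ is $x$-free, this forces $C-|r|\leq|\tilde u|+1$, i.e.\ $|r|\geq|h|+|p|+1$, and hence $|z|=\beta|r|\geq 2(|h|+|p|+1)>N$, contradicting $z\subseteq W$. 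This would establish $W\in\pfl_{k,\alpha}$ and finish the proof.

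The step I expect to be the main obstacle is the straddling case in the last paragraph, in particular ruling out a power whose period contains the letter $x$: a bare square $rr$ is harmless once $\alpha>2$, so the straddling case cannot be settled simply by invoking power-freeness of $\tilde uxhp$ or $px\tilde v$ for such a square; the argument must instead use the full $\beta$-power together with (i) the $x$-freeness of the long middle block $hp$, which constrains where the forced second occurrence of $x$ can sit, and (ii) the two size conditions $|p|\geq|\tilde v|$ and $|h|+|p|>|\tilde u|+|\tilde v|$ coming from the choice of $p$ and $h$. The precise index and length bookkeeping is where the real care is needed (including the subcases $\occur(r,x)>1$ and the ``$+$'' variant); everything else follows readily from Proposition~\ref{u8bnzxhg99}.
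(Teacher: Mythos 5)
Your proposal is correct and follows essentially the same route as the paper: it builds the identical transition word $\tilde uxhpx\tilde v$ from Proposition~\ref{u8bnzxhg99} by choosing a long suffix $p$ of $\bar t$ (recurrent in $t$) preceded by a long prefix $h$ of $t$, and the contradiction in the crossing case rests on the same two facts the paper uses, namely that $hp$ is a long $x$-free block and that a repetition through the second $x$ would force another $x$ at distance $\vert r\vert$ to the left. The only difference is presentational: you verify power-freeness by a direct periodicity argument on an arbitrary $\beta$-power (using $\beta\geq 2$ to get two occurrences of $x$ in the same residue class), whereas the paper funnels the check through Lemma~\ref{tnh900u0} and a case analysis on suffix squares $rr$; your index bookkeeping in the straddling case is sound, so this is a legitimate, slightly more self-contained rendering of the same proof.
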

\begin{proof}
Let $\tilde u,\tilde v,x,t,\bar t$ be as in Proposition \ref{u8bnzxhg99}.
Let $p\in \Suffix(\bar t)$ be the shortest suffix such that $\vert p\vert>\max\{\vert \tilde ux\vert,\vert x\tilde v\vert, \vert u\vert, \vert v\vert\}$. Let $h\in \Prefix(t)$ be the shortest prefix such that $hp\in \Prefix(t)$ and $\vert h\vert>\vert p\vert$; such $h$ exists, because $p$ is a recurrent factor of $t$; see Proposition \ref{u8bnzxhg99}.
We show that $\tilde uxhpx\tilde v\in \pfl_{k,\alpha}$.

We have that $\tilde uxhp\in \pfl_{k,\alpha}$, since $hp\in \Prefix(t)$ and Proposition \ref{u8bnzxhg99} states that $\tilde uxt\in \pfl_{k,\alpha}^{\mathbb{N},R}$. 
Lemma \ref{tnh900u0} implies that it suffices to show that there are no $g\in \Prefix(\tilde v)$ and no $r\in \Sigma_k^*\setminus \{\epsilon\}$ such that $rr\in \Suffix(\tilde uxhpxg)$ and $\vert rr\vert>\vert xg\vert$. To get a contradiction, suppose there are such $r,g$. We distinguish following cases.
\begin{itemize}
\item
If $\vert r\vert\leq \vert xg\vert$ then $rr\in \Suffix(pxg)$, because $\vert p\vert>\vert x\tilde v\vert$ and $xg\in \Prefix(x\tilde v)$. This is a contradiction, since $px\tilde v\in \Suffix(\bar tx\tilde v)$ and $\bar tx\tilde v\in \pfl_{k,\alpha}^{\mathbb{N},L}$; see Proposition \ref{u8bnzxhg99}.
\item
If $\vert r\vert>\vert xg\vert$ then $\vert r\vert\leq \frac{1}{2}\vert \tilde uxhpxg\vert$, otherwise $rr$ cannot be a suffix of $\tilde uxhpxg$. Because $\vert h\vert>\vert p\vert > \max\{\vert \tilde ux\vert,\vert x\tilde v\vert\}$ we have that $r\in \Suffix(hpxg)$. Since $\occur(hp,x)=0$, $\vert h\vert>\vert p\vert>\vert x\tilde v\vert$, and $xg\in \Suffix(r)$ it follows that there is are words $h_1,h_2$ such that $\tilde uxhpxg=\tilde uxh_1h_2pxg$, $r=h_2pxg$ and $r\in \Suffix(\tilde uxh_1)$. It follows that $xg\in \Suffix(\tilde uxh_1)$ and because $\occur(h_1,x)=0$ we have that $\vert h_1\vert\leq \vert g\vert$. Since $\vert p\vert>\vert \tilde ux\vert$ we get that $\vert h_2pxg\vert>\vert\tilde uxg\vert\geq \vert \tilde uxh_1\vert$; hence $\vert r\vert>\vert\tilde uxh_1\vert$. This is a contradiction.
\end{itemize}
We conclude that there are no word $r$  and a no prefix $g\in \Prefix(\tilde v)$ such that with $rr\in \Suffix(\tilde uxhpxg)$.  Hence $\tilde uxhpx\tilde v\in \pfl_{k,\alpha}$. Due to the construction of $p$ and $h$ we have that $u\in \Prefix(\tilde uxhpx\tilde v)$ and $v\in\Suffix(\tilde uxhpx\tilde v)$. This completes the proof.
\end{proof}

\section*{Acknowledgments}
The author acknowledges support by the Czech Science
Foundation grant GA\v CR 13-03538S and by the Grant Agency of the Czech Technical University in Prague, grant No. SGS14/205/OHK4/3T/14.

\bibliographystyle{siam}
\IfFileExists{biblio.bib}{\bibliography{biblio}}{\bibliography{../!bibliography/biblio}}

\end{document}